\newcommand{\diam}{\mathop{\operator@font diam}}
\newtheorem{definition}{Definition}[section]
\newtheorem{remark}{Remark}[section]
\newtheorem{proposition}{Proposition}[section]
\newtheorem{lemma}{Lemma}[section]
\begin{document}

\title{\Huge{{On a duality between time and space cones }}}

\author{Waleed Al-Qallaf, Kyriakos Papadopoulos* \\
\small{Department of Mathematics, Kuwait University, Safat 13060, Kuwait}\\
{\it E-mail*: \textrm{ kyriakos@sci.kuniv.edu.kw}}}

\date{}

\maketitle

\begin{abstract}
We give an exact mathematical construction of a spacelike order $<$, which is dual to the
standard chronological order $\ll$ in the $n$-dimensional Minkowski space $M^n$,
and we discuss its order-theoretic, geometrical as well as its topological implications, conjecturing a possible extension to curved spacetimes.
\end{abstract}

\begin{multicols}{2}

\noindent{{\bf{1. Introduction}}}

\noindent{Many important theorems, within the frame of general relativity,
refer to spacelike properties, such as singularity theorems which require the
existence of a smooth spacelike Cauchy surface $\Sigma$ (Hawking et al., 1973)
, schematic conformal diagrams depicting
causal independence (for example, Penrose, 2007), etc. In all  cases,
spacelike is synonymous to locally acausal \footnote{It was highlighted to us by a reviewer of this article, and we consider it useful to mention this here as well, that spacelike is a purely local property, but acausal is a global property. For instance, the Lorentzian cylinder $M = \mathbb{R}^1 \times \mathbb{S}^1$, with metric $-dt^2 + d\theta^2$, and also the submanifold that is the image of the map $f: \mathbb{R} \to M$ given by $f(s) = (\frac{1}{2} s,s)$: this is manifestly spacelike at all points, but $f(s) \ll f(s+2\pi)$ for all $s$.}, where there is no timelike relation
or information traveling to the speed of light. In this article,
we show that the structure of the null-cone is induced, in a topological sense, by a spacelike order which
creates a spacelike orientation in an analogous way to the timelike
orientation. }

\noindent{{\bf{2. Definitions and notation}}}

\noindent{Let $M^n$ be the $n$-dimensional Minkowski space. Let $Q$ be
the characteristic quadratic form on $M^n$, defined by
$Q(x) = \{x_0^2 + x_1^2 + x_2^2 + \cdots +x_{n-2}^2 - x_{n-1}^2: x = (x_0,x_1,x_2,\cdots,x_{n-1}) \in M^n\}$.

For an event $x \in M^n$, we consider the following sets:
\begin{enumerate}

\item $C^T(x) = \{y : y=x \textrm{ or } Q(y-x) < 0\}$, the {\em time-cone} of $x$,

\item $C^L (x) = \{y : Q(y-x) = 0\}$, the {\em light-cone} of $x$,

\item $C^S(x) = \{y : y=x \textrm{ or } Q(y-x) > 0\}$, the {\em space-cone} \footnote{Here the word ``cone'' is used in a generalised sense, i.e. it is a cone on $I \times \mathbb{S}^{n-2}$. } of $x$,

\item $C^{LT}(x) = C^T(x) \cup C^L(x)$, the union of the time- and light-cones of $x$, also known as the {\em causal cone} of $x$, and

\item $C^{LS}(x) = C^S(x) \cup C^L(x)$, the union of the space- and light-cones of $x$.

\end{enumerate} }


\noindent{For any plane $P_m(x)$, where
$m \in M^n$,  $m \neq 0$ is the normal to  $P_m(x)$, $P_m(x) = \{y : g(m,y-x)=0\}$ and
where $g$ denotes the spacetime metric, we  consider the half-planes: }

\begin{enumerate}
\item $P_m^+(x) = \{ y : g(m,y-x)\geq 0 \textrm{ and } y\neq x\}$ and

\item $P_m^-(x) = \{y : g(m,y-x)\leq 0 \textrm{ and } y\neq x\}$
\end{enumerate}

For abbreviation, we will write $P_+(x) := P_m^+(x)$, $P_-(x) := P_m^-(x)$
and $P(x) := P_m(x)$.

We observe that $P_+(x) \cup P_-(x) = \{x\}^c$, where the superscript $c$ -here and throughout the text- denotes the complement of a set. So, for an event $y \in M$,
$y \in P_+(x) \cup P_-(x) = \{x\}^c$ if and only if $y \neq x$.

Moreover, we define the following subspaces:

\begin{enumerate}

\item $C_+^L(x) = P_+(x) \cap C^L(x)$;

\item $C_-^L(x) = P_-(x) \cap C^L(x)$;

\item $C_+^S(x) = P_+(x) \cap C^S(x)$;

\item $C_-^S(x) = P_-(x) \cap C^S(x)$;


\item $C_+^{LS}(x) = P_+(x) \cap C^{LS}(x)$;

\item $C_-^{LS}(x) = P_-(x) \cap C^{LS}(x)$.

\end{enumerate}

\noindent{It is standard (see (Penrose, 1972)) to consider two partial orders, the {\em chronological} order $\ll$ (which is irreflexive in $M^n$) and
the {\em causal order} $\prec$, which is reflexive, two orders that are defined not only in $M^n$ but in general in any spacetime,
as follows: }
\begin{enumerate}

\item $x \ll y$ iff $y \in C_+^T(x)$ and

\item $x \prec y$ iff $y \in C_+^T(x) \cup C_+^L(x)$

\end{enumerate}

\noindent{In addition, the reflexive relation {\em horismos} $\rightarrow$ is defined as $x \rightarrow y \textrm{ iff } x \prec y \textrm{ but not } x \ll y $.}
\\
\\






\noindent{{\bf{3. The weak interval topology}}}

\noindent{Consider the {\em weak interval topology},
which is constructed in an analogous way to the interval topology (Gierz et al., 1980),
which however does not apply only to lattices. In fact, when restricted to the $2$-dimensional
Minkowski space $M^2$, under the causal order $\prec$, the weak interval topology coincides with the interval topology,
but in general it will not be restricted to lattices. For its construction, we need
a relation $R$ defined on a set $X$. We then consider the sets $I^+(x) = \{y \in X : x R y\}$
and $I^-(x) = \{y \in X : y R x\}$, as well as the collections $\mathcal{S}^+ = \{X\setminus I^-(x) : x \in X\}$
and $\mathcal{S}^- = \{X\setminus I^+(x) : x \in X\}$. A basic-open set $U$ in the weak interval topology $T^{in}$
is defined as $U = A \cap B$, where $A \in \mathcal{S}^+$ and $B \in \mathcal{S}^-$; in other
words, $\mathcal{S}^+ \cup \mathcal{S}^-$ forms a subbase for $T^{in}$. }

\noindent{
The topology $T^{in}$ with respect to the relation $\rightarrow$ was studied
in Antoniadis et al., (2016) and in Papadopoulos et al., (2018a) 
and with respect to the order $\ll$, Papadopoulos et al. (2018b)
 conjectured there is a possibility of creating a duality between timelike
and spacelike, based on a spacelike order $<$ dual to chronology $\ll$. This
paper shows the exact mathematical construction of $<$ as well as the induced topology $T_<^{in}$,
for $n$-dimensional Minkowski space $M^n$. }

\noindent{{\bf{4. The order on the space-cone and its induced topology}}}

\noindent{We define a partial spacelike order $<$ dual to the chronological
order $\ll$. This order is obviously not causal, but it brings
an interesting duality between the time cone $C^T(x)$ and the
space cone $C^S(x)$ of an event $x$. Through $<$, the ``cone'' $C^S(x)$
exhibits similar properties to $C^T$ under $\ll$. Since ``chronological'' comes from the
word ``chronos'', which means time, we name $<$ ``chorological'', as it
refers to ``choros'', space.}

\begin{definition}
For non causally-related events $x,y \in M^n$, $x<y$ iff $y \in C_+^S(x)$, where we have defined $C_+^S(x)$ for some fixed choice of $m \in M^n$.
\end{definition}

\noindent{It follows that $x<y$ iff $x \in C_-^S(y)$. In addition, $\leq$  denotes $<$ including the boundary, in a dual way as $\prec$ is to $\ll$, that is, $x \leq y$ iff $y \in C_+^{LS}(x)$. }

\noindent{
We remark that $<$ is a partial order; the transitivity is obvious, as soon as it is highlighted that $<$ refers to events which are not causally related; thus, if $x,y,z$ are mutually not causally related ($x$ is not causally related to $y$, $y$ is not causally related to $z$ and $x$ is not causally related to $z$), then $x<y$ and $y<z$ implies that $x<z$.  }

\begin{lemma}\label{beforemain}
$C_+^{LS}(x) \cup C_-^{LS}(x) = C^{LS}(x) -\{x\}$.
\end{lemma}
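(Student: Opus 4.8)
The plan is to unwind the definitions on both sides and check the two inclusions separately, the whole argument resting on the single observation already recorded in the paper that $P_+(x)\cup P_-(x)=\{x\}^c$.

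First I would recall that by definition $C_+^{LS}(x)=P_+(x)\cap C^{LS}(x)$ and $C_-^{LS}(x)=P_-(x)\cap C^{LS}(x)$, so that
\[
C_+^{LS}(x)\cup C_-^{LS}(x)=\bigl(P_+(x)\cup P_-(x)\bigr)\cap C^{LS}(x),
\]
by distributivity of intersection over union. Then I would substitute the identity $P_+(x)\cup P_-(x)=\{x\}^c$ from the paragraph preceding Definition~4.1, giving $\bigl(\{x\}^c\bigr)\cap C^{LS}(x)=C^{LS}(x)\setminus\{x\}$. That is exactly the claimed equality, so in fact the lemma is essentially a one-line consequence of the distributive law together with the already-established fact about the half-planes.

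The one point that requires a word of care — and the only place I would expect any friction — is whether $x$ itself lies in $C^{LS}(x)$, since $C^{LS}(x)=C^S(x)\cup C^L(x)$ and $C^S(x)$ is defined to include the event $x$ (whereas $C^L(x)=\{y:Q(y-x)=0\}$ also contains $x$ because $Q(0)=0$). So $x\in C^{LS}(x)$, and removing it is genuine: the right-hand side really is a proper subset of $C^{LS}(x)$. On the left-hand side, neither $P_+(x)$ nor $P_-(x)$ contains $x$ (both carry the explicit restriction $y\neq x$), so no copy of $x$ can sneak back in through the union. Thus the two sides match precisely, with no stray point in either direction; I would state this explicitly to reassure the reader that the $-\{x\}$ on the right is consistent with the definitions of the half-cones on the left.

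To summarize the write-up: (i) apply distributivity to rewrite the left-hand side as $(P_+(x)\cup P_-(x))\cap C^{LS}(x)$; (ii) invoke $P_+(x)\cup P_-(x)=\{x\}^c$; (iii) simplify $\{x\}^c\cap C^{LS}(x)=C^{LS}(x)\setminus\{x\}$, noting $x\in C^{LS}(x)$ to justify that the subtraction is not vacuous and that the half-planes' exclusion of $x$ makes the identity exact. There is no real obstacle here; the lemma is a bookkeeping step whose role is to set up the genuine duality argument in the main theorem to follow, and the proof is correspondingly short.
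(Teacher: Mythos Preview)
Your argument is correct and follows exactly the same route as the paper's proof: distributivity to rewrite the union as $C^{LS}(x)\cap(P_+(x)\cup P_-(x))$, then the identity $P_+(x)\cup P_-(x)=\{x\}^c$, then simplification to $C^{LS}(x)-\{x\}$. The paper presents this as a single chain of equalities without your additional remarks on the membership of $x$, but the substance is identical.
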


\begin{proof}
$C_+^{LS}(x) \cup C_-^{LS}(x) = C^{LS}(x) \cap (P_+(x) \cup P_-(x)) = C^{LS}(x) \cap \{x\}^c = C^{LS}(x) - \{x\}$.

\end{proof}

\begin{proposition}\label{Main}
$[C_+^{LS}(x)]^c \cap [C_-^{LS}(x)]^c = C^T(x)$
\end{proposition}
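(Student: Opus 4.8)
The plan is to reduce the claim to Lemma~\ref{beforemain} by a De Morgan manipulation, and then to identify the resulting complement by unwinding the definition of $C^{LS}(x)$ in terms of the quadratic form $Q$. First I would apply De Morgan to the left-hand side, writing
\[
[C_+^{LS}(x)]^c \cap [C_-^{LS}(x)]^c = \big(C_+^{LS}(x) \cup C_-^{LS}(x)\big)^c .
\]
By Lemma~\ref{beforemain}, the set inside the complement is exactly $C^{LS}(x) - \{x\}$, so the left-hand side equals $\big(C^{LS}(x) - \{x\}\big)^c = [C^{LS}(x)]^c \cup \{x\}$.

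The second step is to compute $[C^{LS}(x)]^c$ explicitly. Since $C^{LS}(x) = C^S(x) \cup C^L(x) = \{y : y = x \text{ or } Q(y-x) > 0\} \cup \{y : Q(y-x) = 0\}$, we have $C^{LS}(x) = \{y : y = x \text{ or } Q(y-x) \geq 0\}$, hence $[C^{LS}(x)]^c = \{y : y \neq x \text{ and } Q(y-x) < 0\}$. Taking the union with $\{x\}$ yields $\{y : y = x \text{ or } Q(y-x) < 0\}$, which is precisely $C^T(x)$ by definition. Chaining the two displays completes the argument.

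I do not expect a genuine obstacle here: the proof is a short exercise in set algebra once Lemma~\ref{beforemain} is invoked. The only point that needs a little care is the bookkeeping of the basepoint $x$: it lies in $C^{LS}(x)$ (so it is stripped out when forming $C^{LS}(x) - \{x\}$) and it must reappear in the complement, which is exactly why the final set is the \emph{closed} time-cone $C^T(x)$ rather than $C^T(x) - \{x\}$. One should also note in passing that this $x$-bookkeeping is what makes the duality with the chronological picture clean: $C^T(x)$ here includes its apex, mirroring the way $C^S(x)$ was defined to include $x$.
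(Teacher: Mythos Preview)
Your proposal is correct and follows essentially the same route as the paper's own proof: apply De Morgan, invoke Lemma~\ref{beforemain} to identify the union as $C^{LS}(x)-\{x\}$, and then recognise the complement (after adjoining $\{x\}$) as $C^T(x)$. The only difference is cosmetic --- you spell out the step $[C^{LS}(x)]^c = C^T(x)-\{x\}$ via the quadratic form $Q$, whereas the paper simply states it --- but the logical skeleton is identical.
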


\begin{proof}
$[C_+^{LS}(x)]^c \cap [C_-^{LS}(x)]^c$\\
$=[C_+^{LS}(x) \cup C_-^{LS}(x)]^c$\\
$=[C^{LS}(x) - \{x\}]^c$\\
$=[C^{LS}(x)]^c \cup \{x\}$\\
$=[C^T(x) - \{x\}] \cup \{x\} $\\
$=C^T(x)$
\end{proof}

\begin{remark}
The vector $m$ (normal to plane $P$) is perpendicular to the vector $<x_0,x_1,...,x_{n-1}>=<0,..,0,1>$ where $x_{n-1}$ corresponds to time.

Also, both Lemma \ref{beforemain} and Proposition \ref{Main}, if expressed for $C^S$ instead of for $C^{LS}$, i.e. if the boundary was omitted, would give $[C_+^S(x)]^c \cap [C_-^S(x)]^c$ which would equal $C^L(x) \cup C^T(x)$, i.e the causal cone $C^{LT}(x)$. 
\end{remark}

\begin{proposition}\label{Main2}
The weak interval topology $T_\leq^{in}$,
generated by the space-like order $\leq$, has as basic-open sets the time-cones
$C^{T}(x)$.
\end{proposition}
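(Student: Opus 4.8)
The plan is to unwind the definition of the weak interval topology $T^{in}_\leq$ for the order $\leq$ and show that the subbasic sets, and hence the basic sets, are exactly the time-cones $C^T(x)$. Recall that for a relation $R$ the construction takes $I^+(x) = \{y : x R y\}$ and $I^-(x) = \{y : y R x\}$, and forms the subbase $\mathcal{S}^+ \cup \mathcal{S}^-$ consisting of the complements $X \setminus I^-(x)$ and $X \setminus I^+(x)$. So the first step is to identify, for the order $\leq$, what $I^+(x)$ and $I^-(x)$ are: by Definition and the remark following it, $x \leq y$ iff $y \in C_+^{LS}(x)$, which is equivalent to $x \in C_-^{LS}(y)$; hence $I^+(x) = C_+^{LS}(x)$ and $I^-(x) = C_-^{LS}(x)$ (as subsets of $M^n$, which plays the role of $X$).

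Next I would compute the subbasic open sets. An element of $\mathcal{S}^+$ has the form $M^n \setminus I^-(x) = [C_-^{LS}(x)]^c$, and an element of $\mathcal{S}^-$ has the form $M^n \setminus I^+(x) = [C_+^{LS}(x)]^c$. A basic-open set is an intersection $A \cap B$ with $A \in \mathcal{S}^+$, $B \in \mathcal{S}^-$. The cleanest case is when $A$ and $B$ come from the same event $x$: then $A \cap B = [C_-^{LS}(x)]^c \cap [C_+^{LS}(x)]^c$, and by Proposition \ref{Main} this equals exactly $C^T(x)$. This already shows every time-cone is basic-open. For the converse direction one must argue that these are (up to the usual closure under finite intersection that generates the topology) the only basic sets that matter, or simply observe that the statement as phrased only asserts that the time-cones $C^T(x)$ \emph{are} basic-open sets of $T^{in}_\leq$, which Proposition \ref{Main} delivers directly; I would also remark that a general basic set $[C_-^{LS}(x)]^c \cap [C_+^{LS}(y)]^c$ with $x \neq y$ is an intersection of two "half-cone complements" and, when nonempty and of full-dimensional type, still reduces to a union/intersection expressible through time-cones, so the time-cones form a base rather than a mere subbase.

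The main obstacle, and the place where care is needed, is precisely this last point: the weak interval topology by construction only guarantees that $\mathcal{S}^+ \cup \mathcal{S}^-$ is a \emph{subbase}, so basic-open sets are finite intersections, and one must check that allowing the two events defining $A$ and $B$ to differ does not produce genuinely new sets outside the family generated by the $C^T(x)$'s — equivalently, that the sets $[C_-^{LS}(x)]^c$ and $[C_+^{LS}(x)]^c$ individually are unions of time-cones. I would handle this by a direct geometric description: $[C_+^{LS}(x)]^c$ is the set of events $y$ with ($y=x$) or ($Q(y-x) < 0$) or ($g(m, y-x) < 0$), i.e. it is $C^T(x) \cup C_-^{S}(x)$ together with $\{x\}$ reorganized, and then use Proposition \ref{Main} and Lemma \ref{beforemain} to reassemble these into an intersection of the form claimed. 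Since the intended reading of the proposition is the weaker "the $C^T(x)$ are among the basic-open sets," the honest minimal proof is: apply Proposition \ref{Main} with $A = [C_-^{LS}(x)]^c \in \mathcal{S}^+$ and $B = [C_+^{LS}(x)]^c \in \mathcal{S}^-$ to get $A \cap B = C^T(x)$, and note this exhibits $C^T(x)$ as a basic-open set; the stronger claim that they form a base would require the extra bookkeeping sketched above.
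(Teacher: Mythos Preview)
Your proposal is correct and matches the paper's own proof essentially line for line: the paper simply observes that $[C_+^{LS}(x)]^c \in \mathcal{S}^-$ and $[C_-^{LS}(x)]^c \in \mathcal{S}^+$ for $R = {\leq}$, and then invokes Proposition~\ref{Main} to conclude $C^T(x)$ is basic-open. Your additional caution about whether the time-cones merely \emph{are} basic-open or actually \emph{form a base} is a fair point that the paper does not address; the paper's argument, like your ``honest minimal proof,'' only establishes the weaker reading.
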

\begin{proof}
The proof follows from Proposition \ref{Main},
by observing that $[C_+^{LS}(x)]^c \in \mathcal{S}^-$ and $[C_-^{LS}(x)]^c \in \mathcal{S}^+$,
where $I^+(x)$ and $I^-(x)$ are defined with respect to $R = \leq$ (see Section 3).
\end{proof}

\noindent{For the last result of our discussion, we will need to use Reed's definition
of {\em intersection topology} (Reed, 1986): }

\begin{definition}\label{intersection topology}
If $T_1$ and $T_2$ are two topologies on a set $X$, then the {\em intersection topology} $T^{int}$
with respect to $T_1$ and $T_2$, is the topology on $X$ such that the set $\{U_1 \cap U_2 : U_1 \in T_1, U_2 \in T_2\}$
forms a base for $(X,T)$.
\end{definition}

\noindent{The topology $Z^T$ is defined to be the intersection topology, according to Reed's definition,
of the topologies $T_\leq^{in}$ and the natural topology of $\mathbb{R}^n$, in $M^n$. This topology, in $M$, coincides
with one of the three topologies that were suggested by Zeeman (Zeeman, 1967), as alternatives
to his Fine topology. Its characteristic is that its open sets are time-cones bounded by
Euclidean-open balls, in $M^n$, and its general relativistic analog is actually the
Path topology of Hawking-King-McCarthy  (Hawking et al., 1976). Agrawal and Shrivastava reviewed several topological properties of $Z^T$ (Agrawal et al., 2009) 
showing that, due to its equality to the Path topology, it is Hausdorff, separable, first countable, and path-connected, not regular, not metrisable, non-Lindel\"of and not simply connected. In addition, the authors completed an in depth study of Zeno sequences in $Z^T$. Zeeman introduced Zeno sequences (Zeeman, 1967) with respect to his ``Fine'' topology $F$; a sequence $\{z_n\}_{n \in \mathbb{N}}$ which converges to some $z$ in $M^n$ under the natural topology of $\mathbb{R}^n$ and not under the topology $F$ (or any other topology in the class of Zeeman topologies, (G\"obel, 1976)) is called a Zeno sequence.  Agrawal-Shrivastava also showed that, within the $n$-dimensional Minkowski space $M^n$, for a Zeno sequence under topology $Z^T$ converging to $z \in M^n$ there exists a subsequence of this sequence whose image is closed under $Z^T$ but not under the natural topology of $\mathbb{R}^n$. In addition, within $M^n$ and for a nonempty open-set $G$ in the natural topology of $\mathbb{R}^n$, if $z\in G$, then $G$ contains a completed image of a Zeno sequence under $Z^T$ converging to $z$. With respect to the convergence of causal-curves, Low (2016) has shown that under the Path topology, i.e. under the general-relativistic analogue of $Z^T$, the Limit Curve Theorem fails to hold (Low, 2016). Thus, the basic arguments for building contradiction in singularity theorems fail under the Path topology, as well. }

\noindent{Low (2016) also establishes, with respect to the Path topology, that if one considers timelike paths, the notion of convergence is not affected by the choice of whether one uses the manifold topology or the path topology on a spacetime (Propositions 1 and 2). In the case of $M^n$, one can restate this argument by substituting the Path topology with $Z^T$ and the manifold topology with the natural topology of $\mathbb{R}^n$. In particular, let $\mathcal{T}$ be the set of (endless) timelike curves; these are curves in $M^n$ where at each point the tangent vector is future pointing and timelike, (Penrose, 1972). Also, $\mathcal{C}$ denotes the set of (endless) causal curves; these are curves in $M^n$ where at each point the tangent vector is future pointing and timelike or null. Then the restriction of $Z^T$ in $\mathcal{T}$ gives a topological space homeomorphic to the restriction of $\mathcal{T}$ to the natural topology of $\mathbb{R}^n$, and also if $\gamma_n \to 
\gamma$, some $\gamma \in \mathcal{T}$, is the restriction of $\mathcal{T}$ to the natural topology of $\mathbb{R}^n$, then for each $x \in \gamma$ there exists a sequence $x_n \in 
\gamma_n$, such that $x_n \to x$ in $Z^T$.}

\noindent{The topology $Z^{LT}$ is defined to be the intersection topology, according to Reed's definition,
of the topologies $T_<^{in}$ and the natural topology of $\mathbb{R}^n$, in $M^n$. This topology fully incorporates the
causal structure of $M^n$. This is so, because it admits a base of open sets of the form $C^{LT}(x) \cap B_{\epsilon}(x)$, where  $B_{\epsilon}(x)$ is a ball in $\mathbb{R}^n$ centered at $x$ and of radius $\epsilon >0$. Low, again with respect to curved spacetimes, shows that the Path topology induces a strictly finer topology on $\mathcal{C}$ than the manifold topology does and that the restriction of the manifold topology in $\mathcal{T}$ is dense in the restriction of the manifold topology in $\mathcal{C}$. Since there is no reference on a general relativistic analog of $Z^{LT}$, we can make the following considerations with respect to the statements of part IV of (Low, 2016), before Proposition 3. Consider $x \in \gamma$ and let a neighborhood of $x$ in $\gamma$ with respect to the natural topology of $\mathbb{R}^n$ be a null-geodesic segment. Then there exists no sequence of timelike curves such that $\gamma_n \in \gamma$ with respect to the restriction of $\mathcal{C}$ in the natural topology but not in $Z^{LT}$. So, the general relativistic analogue of $Z^{LT}$ will not induce a strictly finer topology on $\mathcal{C}$ than the manifold topology does.}

\noindent{{\bf{5. Questions}}}

\noindent{It would be desirable if the  results of Section 4 generalized to any curved spacetime, in the frame of general relativity. This hope comes for the following
intuition. In a relativistic spacetime manifold, wherever there is spacetime, there are events and for every event
there is a light-cone. Since
our construction of $T_\leq^{in}$ is topological, depending exclusively on the interior (time-cone), boundary (light-cone) and
exterior (space-cone) of an event $x$, independently of the geometry of the space-time, one could consider the general-relativistic analog of $T_\leq^{in}$ as the topology which has as basic-open sets the time-cones $C^T(x)$, and the general-relativistic analogue of $Z^T$ as the topology which has basic-open sets the bounded time-cones
$C^T(x) \cap B_{\epsilon}^h(x)$, for some Riemannian metric $h$ on the spacetime manifold, and by considering Riemann-open balls $B_{\epsilon}^h(x)$. This is trivial, from a topological perspective, when we already know the open sets of the special-relativistic topology $Z^T$ without needing any information about the order $\leq$.}

\noindent{The question in this case is how could one define the general-relativistic analogue of the order $\leq$.
How could one describe the general-relativistic analogues of the half-planes $P_+(x)$ and $P_-(x)$, that we examined in Section 2, since they will not be ``flat'' planes in the Euclidean sense anymore, but will follow the geometry of the particular spacetime manifold, so that their union will give ${\{x}\}^c$. So, it would be vital to also express the general relativistic analogue of the normal $m$ to the plane $P(x)$ (Section 2), in a rigorous algebraic way; such an algebraic development should open further directions to our discussion about the duality between causal and locally acausal orders in a spacetime, a duality which might play a role for the passage from locality to non-locality (see, for example, (Vagenas, 2018)). An answer to such a question will also give a solution to the orderability problem (see (Papadopoulos, 2014)) in the particular case of the Path topology of Hawking-King-McCarthy. Similar questions may be asked for the general relativistic analogue of $Z^{LT}$ and a possible generalisation of the order $<$ to curved spacetimes. }

\noindent{{ ACKNOWLEDGEMENTS} }

\noindent{Even if this work reserves a note and is just an initial step on a certainly big discussion, the authors are grateful to many people for shaping their ideas; we are grateful to Fabio Scardigli, for the introduction to the (endless) bibliography and to Nikolaos Kalogeropoulos who gave some hints on how a study of spacelike-timelike duality could play a role in the quantum theory of gravity. Many thanks also to Robert Low for the useful comments on an earlier draft.}

\end{multicols}

\section*{References}

\begin{hangparas}{.25in}{1}

{\bf{Agrawal, G. \& Shrivastava, S. (2009).}}
$t$-topology on the $n$-dimensional Minkowski space. Journal of Mathematical Physics 50, 053515.

{\bf{Antoniadis, I. \& Cotsakis, S. \& Papadopoulos, K. (2016).}} The Causal Order on the Ambient Boundary. Mod. Phys. Lett. A, Vol 31, Issue 20.

{\bf{Gierz, G. \& Hofmann, K.H. \& Keimel, K. \& Lawson, J.D. \& Mislove, M.W. \& Scott, D.S. (1980).}} A compendium of continuous lattices. Springer-Verlag.

{\bf{G\"obel, R. (1976).}} Zeeman Topologies on Space-Times of General Relativity Theory. Comm. Math. Phys. 46, 289-307.

{\bf{Hawking, S.W. \& Ellis, G.F.R. (1973).}} The Large Scale Structure of Space-Time. Cambridge University Press.

{\bf{Hawking, S.W. \& King, A.R. \& McCarthy, P.J. (1976).}} A new topology for curved space–time which incorporates the causal, differential, and conformal structures. Journal of Mathematical Physics, 17 (2). pp. 174-181.

{\bf{Low, R.J. (2016).}} Spaces of Paths and the Path Topology. Journal of Mathematical Physics 57, 092503.

{\bf{Papadopoulos, K. (2014).}} On the Orderability Problem and the Interval Topology. Chapter in the Volume ``Topics in Mathematical Analysis and Applications'', in the Optimization and Its Applications Springer Series, T. Rassias and L. Toth Eds, Springer Verlag.

{\bf{Papadopoulos, K. \& Acharjee, S. \& Papadopoulos, B.K. (2018a).}} The Order On the Light Cone and Its Induced Topology. International Journal of Geometric Methods in Modern Physics 15, 1850069.

{\bf{Papadopoulos, K. \& Papadopoulos, B.K. (2018b).}}
On Two Topologies that were suggested by Zeeman. Mathematical Methods in the Applied Sciences, Vol. 41, Issue 17.

{\bf{Penrose, R. (1972).}} Techniques of Differential Topology in Relativity. CBMS-NSF Regional Conference Series in Applied Mathematics.

{\bf{Penrose, R. (2007).}} The Road to Reality: a complete guide to the laws of the universe. Vintage Books.

{\bf{Reed, G.M. (1986)}} The intersection topology w.r.t. the real line and the countable ordinals. Trans. Am. Math. Society, Vol. 297, No 2, pp 509-520.

{\bf{Vagenas, E., (2018).}} Can an Axion be the Dark Energy Particle?. Kuwait J.Sci., 45, 53 -56.

{\bf{Zeeman, E.C., (1967).}} The Topology of Minkowski Space. Topology, Vol. 6, 161-170.
\end{hangparas}
\end{document}